\newcommand{\LbU}{\mathbf{Lb}^{\boldsymbol{*}}_{\mathbf{1}}}
\newcommand{\LbS}{\mathbf{Lb}^{\boldsymbol{*}}}
\newcommand{\Lb}{\mathbf{Lb}}
\newcommand{\U}{\mathbf{1}}
\newcommand{\tran}{\tau}
\newcommand{\BS}{\mathop{\backslash}}
\newcommand{\SL}{\mathop{/}}
\newcommand{\PMod}{\langle\rangle}
\newcommand{\NMod}{[]^{-1}}
\theoremstyle{definition}\newtheorem{theorem}{Theorem}
\newtheorem{lemma}[theorem]{Lemma}
\begin{document}

\title{Eliminating the Unit Constant in the Lambek Calculus with Brackets}
\author{Stepan Kuznetsov}

\maketitle

\begin{abstract}
We present a translation of the Lambek calculus with brackets and the unit
constant, $\LbU$, into the Lambek calculus with brackets allowing empty antecedents,
but without the unit constant, $\LbS$. Using this translation, we extend previously known
results for $\LbS$ to $\LbU$: (1) languages generated by categorial
grammars based on the Lambek calculus with brackets are context-free (Kanazawa 2017);
(2)
the polynomial-time algorithm for deciding derivability
of bounded depth sequents (Kanovich {\em et al.}\ 2017).
\end{abstract}

\section{Introduction}

The Lambek calculus~\cite{Lambek58} was introduced for describing natural language syntax
by means of type-logical (categorial) grammars. Further research on type-logical grammar
showed that the original system proposed by Lambek appears to be insufficient to cover
intrinsic linguistic phenomena, and so various extensions and modifications of the Lambek
calculus were introduced (see, for example, books by Morrill~\cite{MorrillBook},
Moot and Retor\'{e} \cite{MootRetore} and others). One of
these extensions is the {\em Lambek calculus with brackets} introduced by Moortgat~\cite{Moortgat}.
While the original Lambek calculus is fully associative, brackets block associativity in
specified situations, thus disallowing derivations of ungrammatical phrases like
``the girl whom John loves Mary and Pete loves'' (see~\cite{MorrillBook} for a more detailed
analysis).

The original Lambek grammars generate precisely context-free languages, as shown by Pentus~\cite{PentusCFG}.
For the Lambek calculus extended with brackets, the context-free upper bound was claimed by J\"ager~\cite{Jaeger}.
Unfortunately, J\"ager's argument relied upon a lemma by Versmissen~\cite{Versmissen}, which was
afterwards shown to be incorrect~\cite{FaddaMorrill}\cite{FSCD}. 

Recently, however, Kanazawa~\cite{KanazawaArXiv}
returned to this question and presented a new proof, based on a insightful combination of J\"ager's ideas and
the original Pentus' approach. Kanazawa proved this result for two versions of the Lambek calculus with brackets:
not allowing empty antecedents (we denote it by $\Lb$) and allowing them ($\LbS$).

The third variant of the calculus, $\LbU$, is obtained from $\LbS$ by adding the {\em multiplicative unit constant,}
$\U$. As noticed by Kanazawa~\cite{KanazawaArXiv}, Pentus-style reasoning is not applicable to the case with
the unit constant (even without brackets). In Pentus' proof, the target context-free grammar rules are essentially
all derivable sequents of a bounded size. For $\Lb$ and $\LbS$, the set of these sequents is finite; however,
the measure of size used by Pentus ignores occurrences of $\U$, thus this set of sequents becomes infinite, and
doesn't yield a context-free grammar. A workaround for this issue (in the case without brackets) was presented
by Kuznetsov~\cite{FG2011}: the unit constant gets eliminated by a faithful translation of the calculus with
the unit into the original system. The same problem for the case with brackets ({\it i.e.,} whether $\LbU$-grammars
generate precisely context-free languages) was left as an open question in~\cite{KanazawaArXiv}.

In this paper, we extend the construction from~\cite{FG2011} to an embedding of $\LbU$ into $\LbS$.
Thus, we show that $\LbU$-grammars generate the same class of languages as $\LbS$-grammars---and, due
to Kanazawa~\cite{KanazawaArXiv}, it is the class of context-free languages.

The translation of $\LbU$ to $\LbS$ also has an algorithmic application. While the original Lambek calculus,
and, therefore, the Lambek calculus with brackets is NP-complete (Pentus~\cite{PentusNP}), for sequents of
bounded formula complexity and bracket nesting depth Kanovich {\it et al.}~\cite{FSCD}, extending Pentus~\cite{PentusPoly},
present a polynomial-time algorithm for deciding derivability in $\LbS$. Our translation generalises this algorithm
from $\LbS$ to $\LbU$.

\section{The Calculi $\LbS$ and $\LbU$}

In this section and further we use the notation of~\cite{FSCD}; the notation used 
in~\cite{KanazawaArXiv} is slightly different.

The syntax of the Lambek calculus with brackets is a bit more involved than a standard
sequential calculus. Formulae of $\LbS$ are recursively built from variables
($\mathrm{Var} = \{ p_1, p_2, \dots \}$) using three binary connectives,
$\BS$ (left division), $\SL$ (right division), $\cdot$ (product) and two unary ones,
$\PMod$ and $\NMod$. The binary connectives come from the original Lambek calculus~\cite{Lambek58};
as one can easily see, the system $\LbS$ defined below is a conservative extension of the
version of the Lambek calculus which allows empty antecedents~\cite{Lambek61}. The unary
connectives operate brackets that are used to introduce controlled non-associativity.
Sequents of $\LbS$ are expressions of the form $\Pi \to C$, where $C$ is a formula and
$\Pi$ is a {\em meta-formula.} Meta-formulae are built from formulae using comma (,) and
brackets ([...]). By definition, comma is associative, and the empty meta-formula $\Lambda$
is the unit object w.r.t.\ comma: $\Gamma,\Lambda$ and $\Lambda,\Gamma$ are both considered
graphically equal to $\Gamma$. By $\Delta(\Pi)$ we denote a meta-formula $\Delta$ with
a {\em designated occurrence} of a sub-meta-formula $\Pi$.

Axioms and rules of $\LbS$ are as follows.

$$
\infer[(\mathrm{ax})]{p_i \to p_i}{}
$$
$$
\infer[(\BS\to)]{\Delta(\Pi, A \BS B) \to C}{\Pi \to A & \Delta(B) \to C}
\qquad
\infer[(\to\BS)]{\Pi \to A \BS B}{A, \Pi \to B}
$$
$$
\infer[(\SL\to)]{\Delta(B \SL A, \Pi) \to C}{\Pi \to A & \Delta(B) \to C}
\qquad
\infer[(\to\SL)]{\Pi \to B \SL A}{\Pi, A \to B}
$$
$$
\infer[(\cdot\to)]{\Delta(A \cdot B) \to C}{\Delta(A, B) \to C}
\qquad
\infer[(\to\cdot)]{\Pi, \Gamma \to A \cdot B}{\Pi \to A & \Gamma \to B}
$$
$$
\infer[(\PMod\to)]{\Delta(\PMod A) \to C}{\Delta([A]) \to C}
\qquad
\infer[(\to\PMod)]{[\Pi] \to \PMod A}{\Pi \to A}
$$
$$
\infer[(\NMod\to)]{\Delta([\NMod A]) \to C}{\Delta(A) \to C}
\qquad
\infer[(\to\NMod)]{\Pi \to \NMod A}{[\Pi] \to A}
$$

Adding the unit constant, $\U$, with the following rules of inference (cf.~\cite{Lambek69}),
$$
\infer[(\U\to)]{\Delta(\U) \to C}{\Delta(\Lambda) \to C}
\qquad
\infer[(\to\U)]{\to\U}{}
$$
yields the calculus $\LbU$. 

The cut rule of the following form is admissible~\cite{Moortgat}.
$$
\infer[(\mathrm{cut})]{\Delta(\Pi) \to C}{\Pi \to A & \Delta(A) \to C}
$$
Admissibility of cut allows replacing a subformula with an equivalent one preserving
derivability. 

\section{Translating $\LbU$ to $\LbS$}

In this section we present a translation of $\LbU$ formulae that eliminates the unit constant.

Informally, we replace each occurrence of $\U$ by $(q \BS q)$, where
$q$ is a fresh variable. For classical propositional logic, this would be sufficient, since
$(\varphi \Rightarrow \varphi)$ there is equivalent to the ``{\sc true}'' constant. In order to make
this construction work for the substructural
system $\LbU$, however, we also need to add some extra $(q \BS q)$'s,
depending on the polarity of specific subformula.

Formally, we define two translations, $\tran^+$ and $\tran^-$, by joint induction:
\begin{align*}
& \tran^+(\U) = \tran^-(\U) = q \BS q\\
& \tran^+(p_i) = (q \BS q) \cdot p_i \cdot (q \BS q) && \tran^-(p_i) = p_i\\
& \tran^+(A \BS B) = \tran^-(A) \BS \tran^+(B) && \tran^-(A \BS B) = \tran^-(A) \BS \tran^-(B) \\
& \tran^+(B \SL A) = \tran^+(B) \SL \tran^+(A) && \tran^-(B \SL A) = \tran^-(B) \SL \tran^+(A)\\
& \tran^+(A \cdot B) = \tran^+(A) \cdot \tran^+(B) && \tran^-(A \cdot B) = \tran^-(A) \cdot \tran^-(B)\\
& \tran^+(\PMod A) = (q \BS q) \cdot \PMod \tran^+(A) \cdot (q \BS q) &&
\tran^-(\PMod A) = \PMod \tran^-(A)\\
& \tran^+(\NMod A) = \NMod \tran^-(A) && 
\tran^-(\NMod A) = (q \BS q) \BS \NMod \tran^-(A) \SL (q \BS q)
\end{align*}

For metaformulae, we define only $\tran^-$:
\begin{align*}
&\tran^-(\Lambda) = \Lambda\\
&\tran^-(\Gamma, \Delta) = \tran^-(\Gamma), \tran^-(\Delta)\\
&\tran^-([\Gamma]) = [\tran^-(\Gamma)]
\end{align*}



\begin{theorem}\label{Th:main}
For any $\LbU$-sequent $\Pi \to C$ that has no occurrences of $q$,
$\Pi \to C$ is derivable in $\LbU$ if{f} $\tran^-(\Pi) \to \tran^+(B)$ is
derivable in $\LbS$.
\end{theorem}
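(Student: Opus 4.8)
The statement is an equivalence, so I would prove the two implications separately, and throughout I would use admissibility of cut (stated above) in both calculi together with one basic observation: the sequent $\Lambda \to q \BS q$ is derivable in $\LbS$ (apply $(\to\BS)$ to the axiom $q \to q$). This is precisely the property that lets $q \BS q$ emulate $\U$ on the ``empty'' side, and it will be invoked constantly. I would also note at the outset that every rule of $\LbS$ is a rule of $\LbU$, so any $\LbS$-derivation is in particular an $\LbU$-derivation.

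For the ``if'' direction (from $\LbS$ to $\LbU$) I would use a substitution-and-collapse strategy. Given an $\LbS$-derivation of $\tran^-(\Pi) \to \tran^+(C)$, read it as an $\LbU$-derivation and uniformly substitute the fresh variable $q$ by the constant $\U$; since substituting a formula for a variable preserves derivability (a routine induction on derivations), this yields $\LbU \vdash (\tran^-(\Pi))[q{:=}\U] \to (\tran^+(C))[q{:=}\U]$. The core is then an Equivalence Lemma, proved by joint induction on $A$: in $\LbU$ both $(\tran^+(A))[q{:=}\U]$ and $(\tran^-(A))[q{:=}\U]$ are two-sided equivalent to $A$. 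The base cases reduce to $\U \BS \U \dashv\vdash \U$ and to the fact that $\U$ is a genuine two-sided unit (so the padding factors coming from $q\BS q \mapsto \U\BS\U$ and the wrappers $(q\BS q)\BS(\cdots)\SL(q\BS q)$ collapse), and the inductive cases follow by congruence, i.e.\ by replacing equivalent subformulae using cut. Extending the lemma to metaformulae componentwise gives $(\tran^-(\Pi))[q{:=}\U] \dashv\vdash \Pi$, and a few cuts turn the substituted sequent into $\Pi \to C$.

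For the ``only if'' direction (from $\LbU$ to $\LbS$) I would induct on a cut-free $\LbU$-derivation of $\Pi \to C$. The point of the polarity-indexed definition of $\tran^+,\tran^-$ is that each logical rule should simulate directly: the premises delivered by the induction hypothesis are exactly the sequents that the corresponding $\LbS$-rule consumes, with any absent $q\BS q$ supplied from $\Lambda \to q\BS q$. Concretely, the axiom case is the identity lemma $\tran^-(p_i)\to\tran^+(p_i)$ (derivable by padding $p_i\to p_i$ on both sides), the $(\to\U)$ case is just $\Lambda\to q\BS q$, and the product, division, and $\PMod$ rules go through by applying the matching $\LbS$-rule and, where needed, padding the succedent via $(\to\cdot)$ with $\Lambda\to q\BS q$; for the $\NMod$ rules I would insert a short cut against the auxiliary derivable sequent $[\,(q\BS q)\BS\NMod X\SL(q\BS q)\,]\to X$, obtained by feeding two copies of $\Lambda\to q\BS q$ into $(\SL\to)$ and $(\BS\to)$ and then applying $(\NMod\to)$.

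The real obstacle is the $(\U\to)$ rule. There the induction must pass from a derivation of the translated $\Delta(\Lambda)$-sequent to the translated $\Delta(\U)$-sequent, which is nothing but inserting a single $q\BS q$ at a negative position of the antecedent; and such ``weakening by $q\BS q$'' is \emph{not} admissible in general in a substructural calculus. I would isolate this as a Unit-Absorption Lemma: inserting $q\BS q$ at any negative position preserves $\LbS$-derivability, \emph{because} every place into which antecedent material is eventually consumed---the succedent, and the denominators of all antecedent divisions---is a positive translation and therefore carries $q\BS q$ paddings at its boundaries that can swallow the stray unit. Proving this lemma, namely that a spurious $q\BS q$ can always be routed to and absorbed by an adjacent positive padding (including across brackets, where the interaction of $\PMod$ with its product-padding and of $\NMod$ with its division-padding must be checked), is where essentially all the work lies; it is exactly the behaviour that the polarity-sensitive placement of the extra $q\BS q$'s was engineered to guarantee.
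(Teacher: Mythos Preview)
Your ``if'' direction is exactly the paper's: substitute $\U$ for $q$, then collapse using the fact that $(\tau^\pm A)[q:=\U]$ is equivalent to $A$ in $\LbU$.

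For the ``only if'' direction the paper takes a different route that sidesteps your Unit-Absorption Lemma entirely. Rather than proving in $\LbS$ that a stray $q\BS q$ can always be absorbed, the paper reformulates the \emph{source} calculus: it defines a variant ${\LbU}'$ in which the rule $(\U\to)$ is dropped and in exchange the axiom, $(\to\U)$, $(\to\PMod)$, and $(\NMod\to)$ are widened to allow arbitrary strings $\U^k$, $\U^m$ of units alongside the principal material. A short induction on $\LbU$-derivations (the paper's Lemma~\ref{Lm:LbUp}) shows ${\LbU}'$ proves the same sequents as $\LbU$: one simply permutes $(\U\to)$ upward until it is swallowed by one of the widened rules. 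The translation of a ${\LbU}'$-derivation into $\LbS$ is then mechanical, because the only places units ever appear are exactly the places where $\tau^+$ has planted a $(q\BS q)$-padding, and sequents like $(q\BS q)^k \to q\BS q$ and $(q\BS q)^k, p_i, (q\BS q)^m \to (q\BS q)\cdot p_i\cdot(q\BS q)$ are derivable outright.

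Your proposal leaves a genuine gap at the point you yourself flag as ``where essentially all the work lies.'' The heuristic that the spurious $q\BS q$ can be ``routed to an adjacent positive padding'' is not a proof, and the natural attempt---induction on the $\LbS$-derivation of $\tau^-(\Pi)\to\tau^+(C)$---breaks down because the class of translated sequents is not closed under the $\LbS$ rules read upward: for instance, applying $(\to\cdot)$ to a succedent $\tau^+(p_i) = (q\BS q)\cdot p_i\cdot(q\BS q)$ produces premises whose succedents ($q\BS q$, or $p_i\cdot(q\BS q)$) are not of the form $\tau^+(D)$, so the induction hypothesis no longer applies. The paper's reformulation is precisely the device that rescues the induction by moving it to the source side, where the derivation structure is under control; without that idea (or an equivalent invariant strong enough to survive decomposition of the padding products), your Unit-Absorption Lemma remains unproved.
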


In order to make the proof more convenient, we first reformulate $\LbU$: remove $(\U\to)$ and replace some of the other rules with the following ones
(here and further $\U^n$ means $\underbrace{\U, \U, \dots, \U}_{\text{$n$ times}}$).
$$
\infer[(\to\U)']
{\U^k \to \U}{}
\qquad
\infer[(\mathrm{ax})']
{\U^k, p_i, \U^m \to p_i}{}
$$
$$
\infer[(\to\PMod)']
{\U^k, [\Pi], \U^m \to \PMod A}
{\Pi \to \PMod A}
\qquad
\infer[(\NMod\to)']
{\Delta([\U^k, \NMod B, \U^m]) \to C}
{\Delta(B) \to C}
$$
All other rules are left intact.

Let's call the new calculus ${\LbU}'$.

\begin{lemma}\label{Lm:LbUp}
A sequent is derivable in ${\LbU}'$ if{f} it is derivable in 
$\LbU$.
\end{lemma}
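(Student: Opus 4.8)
The plan is to prove the two directions of the equivalence separately, isolating the only real content as the admissibility of the old rule $(\U\to)$ inside ${\LbU}'$.

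For the direction from ${\LbU}'$ to $\LbU$, I would simply check that each primed rule is admissible in $\LbU$. Indeed, $(\mathrm{ax})'$ is obtained from $(\mathrm{ax})$ by applications of $(\U\to)$ inserting the units $\U^k,\U^m$ next to $p_i$; $(\to\U)'$ is obtained from the axiom $(\to\U)$ by applications of $(\U\to)$; and $(\to\PMod)'$, $(\NMod\to)'$ follow from $(\to\PMod)$, $(\NMod\to)$ respectively, again by inserting the extra units with $(\U\to)$ (at top level around $[\Pi]$ in the first case, inside the bracket around $\NMod B$ in the second). Since every remaining rule of ${\LbU}'$ is literally a rule of $\LbU$, replaying a ${\LbU}'$-derivation step by step yields a $\LbU$-derivation of the same sequent.

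For the converse direction I would first invoke admissibility of cut in $\LbU$ to assume the given derivation is cut-free, so that it uses only $(\mathrm{ax})$, $(\to\U)$, $(\U\to)$ and the logical rules. Each of $(\mathrm{ax})$, $(\to\U)$, $(\to\PMod)$, $(\NMod\to)$ arises from its primed counterpart by taking $k=m=0$, and the other logical rules are already rules of ${\LbU}'$; hence the sole step not immediately available in ${\LbU}'$ is $(\U\to)$. The whole direction therefore reduces to the key claim that \emph{$(\U\to)$ is admissible in} ${\LbU}'$: from a ${\LbU}'$-derivation of $\Delta(\Lambda)\to C$ one can build a ${\LbU}'$-derivation of $\Delta(\U)\to C$. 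I would prove this by induction on the ${\LbU}'$-derivation of $\Delta(\Lambda)\to C$, where $\Delta(\Lambda)$ records the designated position at which the new $\U$ is to be inserted. The case analysis is on the last rule. When that rule acts on a part of the antecedent disjoint from the designated position — this covers all the logical rules as well as the context parts of the primed rules — the designated $\Lambda$ lies inside exactly one premise, so I apply the induction hypothesis there and reapply the rule. The base-like cases are those where the designated position falls into a unit slot of a primed leaf rule: inserting a $\U$ beside the $p_i$ of $(\mathrm{ax})'$, into the $\U^k$ of $(\to\U)'$, or into the $\U^k,\U^m$ flanking $[\Pi]$ in $(\to\PMod)'$ and $\NMod B$ in $(\NMod\to)'$; in each such case the result is again an instance of the same primed rule with $k$ or $m$ increased by one, so the unit is simply absorbed. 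Combined with the reduction above, this establishes the $\LbU\Rightarrow{\LbU}'$ direction.

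The main obstacle I anticipate is the bookkeeping in this induction: one must verify that for every rule the designated (empty) position is confined either to a single premise or to a unit slot. This requires care for the context-splitting rules $(\to\cdot)$, $(\BS\to)$, $(\SL\to)$, where the correct premise must be chosen, and for the bracket-introducing primed rules $(\to\PMod)'$ and $(\NMod\to)'$, where the designated position may sit outside the bracket, inside a unit slot, or inside $\Pi$ (resp.\ inside the context $\Delta$), each case calling for a slightly different response. Once these subcases are laid out, absorption into the $\U^k/\U^m$ slots makes each of them routine, and the two directions together give the lemma.
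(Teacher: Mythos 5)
Your proposal is correct and follows essentially the same route as the paper: both directions reduce to the single key claim that $(\U\to)$ is admissible in ${\LbU}'$, proved by induction on the derivation with the same propagation-through-context versus absorption-into-a-$\U^k$-slot case split, and the converse direction by unfolding each primed rule as the unprimed rule followed by $k+m$ applications of $(\U\to)$. Your preliminary appeal to cut admissibility is harmless but unnecessary, since cut is not a primitive rule of $\LbU$ as presented.
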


\begin{proof}
Since the new $'$-rules become the original rules of $\LbU$ when $k=m=0$, for the
``if'' part it is sufficient to show that $(\U\to)$ is admissible in ${\LbU}'$, {\em i.e.,}
that if $\Psi(\Lambda) \to C$ is derivable in ${\LbU}'$, then so is $\Psi(\U) \to C$.

Proceed by induction on derivation.
Let's denote $\Pi$, $\Gamma$, and $\Delta$ (except for the designated part that is affected by the rule) 
in the rules of $\LbU$ as the {\em context}. Consider the last rule that derives $\Psi(\Lambda) \to C$.

If this designated occurrence of $\Lambda$ is inside the context of the last rule that derives $\Psi(\Lambda) \to C$,
then we can trace it upwards to the premise (one of the premises) of the rule, replace $\Lambda$ by $\U$ there
(the sequent is still derivable by induction hypothesis), and then apply the rule. In other words, in this case
the two rules are exchangeable, and we can propagate the $(\U\to)$ rule upwards. For $(\SL\to)$, $(\BS\to)$,
$(\to\SL)$, $(\to\BS)$, $(\cdot\to)$, $(\to\cdot)$, $(\PMod\to)$, and $(\to\NMod)$ this is the only possible
situation: a possible place for inserting $\U$ is always in the context.

If $\Psi(\Lambda) \to C$ is an axiom of the form $(\mathrm{ax})'$, namely, $\U^k, p_i, \U^m \to p_i$, then $\Psi(\U) \to C$ is either 
$\U^{k+1}, p_i, \U^m \to p_i$ or $\U^k, p_i, \U^{m+1} \to p_i$; both are again instances of $(\mathrm{ax})'$. The $(\to\U)'$ case
is handled in the same way. For the $(\to\PMod)'$ case, the only interesting situation is when $\U$ is added outside
the brackets (the case when it is added into the context $\Pi$, was already considered); in this situation the extra $\U$ again
gets absorbed by the $(\to\PMod)'$ rule. The $(\NMod\to)'$ case is dual.


For the ``only if'' part, just notice that each $'$-rule can be represented as a
consequent application of the corresponding rule of $\LbU$ and then $k+m$ times
$(\U\to)$  (or $k$ times for $(\to\U)'$). 
\end{proof}

\begin{proof}[Proof of Theorem~\ref{Th:main}]
The ``only if'' part: let $\Pi \to C$ be derivable in $\LbU$. Then by Lemma~\ref{Lm:LbUp}
it is derivable in ${\LbU}'$. We need to show that $\tran^-(\Pi) \to \tran^+(C)$ is derivable
in $\LbS$. Proceed by induction on the derivation of $\Pi\to C$ in ${\LbU}'$. Applications of rules
without $'$ are translated into $\LbS$ ``as is.'' The $(\to\U)'$ transforms into $(q\BS q)^k \to q \BS q$,
which is derivable in $\LbS$; $(\mathrm{ax})'$ becomes $(q \BS q)^k, p_i, (q \BS q)^m \to (q \BS q) \cdot p_i \cdot (q \BS q)$,
which is also derivable. Finally, $(\to\PMod)'$ and $(\NMod\to)'$ are translated as follows:
$$
\infer[(\to\cdot) \text{ two times}]{(q \BS q)^k, [\tau^-(\Pi)], (q \BS q)^m \to (q \BS q) \cdot \PMod \tau^+(A) \cdot (q \BS q)}
{(q \BS q)^k \to q \BS q & \infer[(\to\PMod)]{[\tau^-(\Pi)] \to \PMod \tau^+(A)}{\tau^-(\Pi) \to \tau^+(A)} & (q \BS q)^m \to q \BS q}
$$
$$
\infer[(\BS\to) \text{ and } (\SL\to)]
{\Delta([(q \BS q)^k, (q \BS q) \BS \NMod \tau^-(B) \SL (q \BS q), (q \BS q)^m]) \to C}
{(q \BS q)^k \to q \BS q & \infer[(\NMod\to)]{\Delta([\NMod \tau^-(B)]) \to C}{\Delta(\tau^-(B)) \to C} &
(q \BS q)^m \to q \BS q}
$$

The ``if'' part is easier: if $\tau^-(\Pi) \to \tau^+(C)$ is derivable in $\LbS$, then it is also derivable in $\LbU$.
Substitute $\U$ for $q$ (substituting formulae for variables---but not for the unit constant!---preserves derivability).
Since $q$ is a fresh variable, the substitution affects only the $(q \BS q)$ combinations introduced by $\tau^+$ and $\tau^-$,
and it is easy to see that for any formula $B$ after this substitution $\tau^+(B)$ and $\tau^-(B)$ become equivalent to $B$.
Thus, $\tau^-(\Pi) \to \tau^+(C)$ transforms to a sequent equivalent to $\Pi \to C$, therefore the latter one is derivable in $\LbU$.
\end{proof}

\section{Applications}

The first application of Theorem~\ref{Th:main} is the characterisation of the class of languages generated by
$\LbU$-grammars. 

Let $\Sigma$ be a finite alphabet. An $\LbU$-grammar over $\Sigma$ is a triple $\langle \rhd, H \rangle$, where
$H$ is an $\LbU$-formula (called the {\em target} formula), and $\rhd$ is a finite relation between letters $\Sigma$ and 
$\LbU$-formulae. J\"ager~\cite{Jaeger} gives two definitions of a word $a_1\dots a_n$ being accepted by such a grammar.
The word is {\em s-accepted,} if there exist such formulae $A_1, \dots, A_n$ that $a_i \rhd A_i$ ($i = 1,\dots, n)$ and
the sequent $A_1, \dots, A_n \to H$ is derivable in $\LbU$. The word is {\em t-accepted,} if, again, there exist $A_1, \dots, A_n$
such that $a_i \rhd A_i$, and there also exists a multiformula $\Pi$ such that $\Pi \to H$ is derivable and $A_1, \dots, A_n$
is obtained from $\Pi$ by erasing $[$ and $]$ ({\it i.e.,} it is the {\em yield} of the treelike bracketed structure $\Pi$).
The language t-generated (resp., s-generated) by the grammars defined as the set of all t-accepted (resp., s-accepted) words.

Theorem~\ref{Th:main} immediately yields the following corollary:
\begin{theorem}
The class of languages t-generated (resp., s-generated) by $\LbU$-grammars, coincides with the class of languages
t-generated (resp., s-generated) by $\LbS$-grammars.
\end{theorem}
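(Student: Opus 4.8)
The plan is to derive the final theorem (the characterisation of languages t- and s-generated by $\LbU$-grammars) as a direct corollary of Theorem~\ref{Th:main}, by transporting grammars across the translation $\tau$ rather than re-proving any derivability facts. The key observation is that both notions of acceptance are defined purely in terms of $\LbU$-derivability of sequents $A_1,\dots,A_n\to H$ (for s-acceptance) or $\Pi\to H$ with yield $A_1,\dots,A_n$ (for t-acceptance), and Theorem~\ref{Th:main} gives a faithful correspondence between $\LbU$-derivability and $\LbS$-derivability of the translated sequent. So the approach is to show that any $\LbU$-grammar can be converted into an $\LbS$-grammar generating the same language, and vice versa.

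First I would handle the direction from $\LbU$ to $\LbS$. Given an $\LbU$-grammar $\langle\rhd,H\rangle$, I would fix a variable $q$ not occurring in any formula of the grammar (there are only finitely many such formulae, so this is possible), and build the $\LbS$-grammar $\langle\rhd',\tau^+(H)\rangle$ where $a\rhd' \tau^-(A)$ whenever $a\rhd A$. Then a word $a_1\dots a_n$ is s-accepted by the original grammar iff there are $A_i$ with $a_i\rhd A_i$ and $A_1,\dots,A_n\to H$ derivable in $\LbU$; by Theorem~\ref{Th:main} this is equivalent to $\tau^-(A_1),\dots,\tau^-(A_n)\to\tau^+(H)$ being derivable in $\LbS$, which (since $\tau^-$ is applied componentwise to a comma-list, i.e. $\tau^-(A_1,\dots,A_n)=\tau^-(A_1),\dots,\tau^-(A_n)$) is exactly s-acceptance by the new grammar. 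For t-acceptance the same argument applies, but now I would use that $\tau^-$ commutes with the bracketed metaformula structure, namely $\tau^-([\Gamma])=[\tau^-(\Gamma)]$ and $\tau^-(\Gamma,\Delta)=\tau^-(\Gamma),\tau^-(\Delta)$. This means $\tau^-$ preserves the treelike bracketed structure and in particular preserves yields: the yield of $\tau^-(\Pi)$ is obtained from the yield of $\Pi$ by replacing each leaf formula $A$ with $\tau^-(A)$. Hence $\Pi\to H$ with yield $A_1,\dots,A_n$ corresponds to $\tau^-(\Pi)\to\tau^+(H)$ with yield $\tau^-(A_1),\dots,\tau^-(A_n)$, giving the t-acceptance equivalence.

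For the reverse inclusion (every $\LbS$-generated language is $\LbU$-generated) I would simply invoke the fact that $\LbS$ is the fragment of $\LbU$ without the constant $\U$: since $\LbU$ is a conservative extension of $\LbS$ (an $\LbS$-sequent, containing no $\U$, is derivable in $\LbU$ iff it is derivable in $\LbS$---this follows from Theorem~\ref{Th:main} applied to $\U$-free formulae, or directly from cut-elimination), any $\LbS$-grammar is already an $\LbU$-grammar generating the same language. This makes the reverse direction essentially trivial, so the substantive content is the forward direction above.

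I expect the main obstacle to be purely bookkeeping rather than conceptual: one must verify carefully that the acceptance conditions translate cleanly, in particular that $\tau^-$ distributes over the comma/bracket metaformula constructors exactly as defined, so that the translated antecedent really is the componentwise translation of the original leaves and yields are preserved. A subtle point worth stating explicitly is the choice of the fresh variable $q$: the grammar involves only finitely many formulae, so a single $q$ avoiding all of them suffices, and the hypothesis of Theorem~\ref{Th:main} (that the sequent has no occurrence of $q$) is met for every sequent arising from the grammar. Once these points are checked, both the s- and t-generated language classes coincide between $\LbU$- and $\LbS$-grammars, and combining this with Kanazawa's context-freeness result for $\LbS$ yields the desired characterisation.
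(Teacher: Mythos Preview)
Your proposal is correct and follows exactly the same approach as the paper: apply $\tau^-$ to the lexicon formulae and $\tau^+$ to the target $H$, then invoke Theorem~\ref{Th:main}. The paper's own proof is a one-line sketch of precisely this construction (it only spells out the non-trivial direction), and your write-up simply fills in the bookkeeping details---the componentwise behaviour of $\tau^-$ on meta-formulae, preservation of yields, choice of fresh $q$, and the trivial reverse inclusion via conservativity---that the paper leaves implicit.
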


\begin{proof}
For the inclusion in the non-trivial direction, apply $\tau^-$ to all formulae associated to letters of
$\Sigma$ by $\rhd$, and $\tau^+$ to the target formula $H$.
\end{proof}

In combination with Kanazawa's result that $\LbS$-grammars t-generate exactly the class of context-free languages,
this gives a solution for the question left open in~\cite{KanazawaArXiv}: $\LbU$-grammars also t-generate exactly all
context-free languages. We conjecture that our construction also works for the multimodal version of $\LbU$, 
solving the second open question from~\cite{KanazawaArXiv}.
The question about the class of s-generated languages appears to be still open, though much less interesting from
the linguistic point of view.

The second application is a fast algorithm for deciding derivability in $\LbS$. While the full Lambek calculus is
NP-complete~\cite{PentusNP}, Pentus in~\cite{PentusPoly} introduces a depth parameter $d$ (in the product-free
case it is just the Horn depth of $\BS$ and $\SL$) and presents an algorithm that works in polynomial time w.r.t.\ $n$
and $2^d$ (where $n$ is the length of the input sequent). Kanovich {\em et al.}~\cite{FSCD} generalise this
algorithm to $\LbS$, adding the third parameter, $b$, which is the nesting depth of $[...]$, $\PMod$, and $\NMod$. The running
time of this algorithm is bounded by a polynom of $n$, $2^d$, and $n^b$ (if there are brackets, $n^b$ absorbs $n$). Since $\tau^+$ and
$\tau^-$ alter these parameters only linearly, Theorem~\ref{Th:main} yields an algorithm also for $\LbU$ with the same running
time estimation:
\begin{theorem}
There exists an algorithm that decides derivability in $\LbU$ with running time bounded by a polynom of
$n$, $2^d$, and $n^b$.
\end{theorem}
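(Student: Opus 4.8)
The plan is to decide $\LbU$-derivability by reducing it, through the translation of Theorem~\ref{Th:main}, to an $\LbS$-derivability question, and then to invoke the algorithm of Kanovich {\em et al.}~\cite{FSCD}. First I would, given an input $\LbU$-sequent $\Pi \to C$ of length $n$, pick a variable $q$ not occurring in $\Pi \to C$ (this is always possible, since there are infinitely many variables, and costs only linear time), so that the hypothesis of Theorem~\ref{Th:main} is satisfied. I would then compute the $\LbS$-sequent $\tau^-(\Pi) \to \tau^+(C)$: since both translations are defined by a straightforward structural recursion in which every connective, variable and occurrence of $\U$ is replaced by an expression of bounded size, this can be done in linear time. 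By Theorem~\ref{Th:main}, the original sequent is derivable in $\LbU$ if{f} $\tau^-(\Pi) \to \tau^+(C)$ is derivable in $\LbS$, so running the $\LbS$-algorithm on the translated sequent and returning its answer is correct. What remains, and where the real work lies, is to bound the running time, i.e.\ to track how the three parameters $n$, $d$, $b$ behave under $\tau^{\pm}$.

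The length and the bracket/modality depth are the easy parameters. Each symbol of $\Pi \to C$ contributes only a constant number of symbols to its translation (for instance $\tau^+(p_i) = (q\BS q)\cdot p_i \cdot (q\BS q)$, and $\tau^-(\NMod A)$ prepends and appends one $(q\BS q)$-division), so the length $n'$ of the translated sequent satisfies $n' = O(n)$. The bracket/modality nesting is likewise controlled: $\tau^+$ and $\tau^-$ send each occurrence of $\PMod$, $\NMod$ and each bracket pair $[\dots]$ to exactly one occurrence of the same construct, and all the inserted $(q\BS q)$'s are bracket- and modality-free, so the new nesting depth $b'$ equals $b$.

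The hard part will be the division-depth parameter $d$. The troublesome clause is $\tau^-(\NMod A) = (q\BS q)\BS \NMod \tau^-(A) \SL (q\BS q)$, which wraps its argument inside a fresh pair of divisions; along a path descending through $k$ nested modalities translated by $\tau^-$, the nesting of $\BS$ and $\SL$ thus grows by $O(k)$. I would argue this growth is harmless. The number of modalities nested along any single path of the source formula is at most $b$, so these inserted divisions increase the depth by only $O(b)$, giving $d' \le d + O(b)$; the remaining clauses ($\BS$, $\SL$, $\cdot$, $p_i$, $\PMod$) add only a bounded constant and are routine to check. Since $2^{b} \le n^{b}$ for $n \ge 2$, we get $2^{d'} \le 2^{d}\cdot 2^{O(b)} = O\!\left(2^{d}\,(n^{b})^{O(1)}\right)$, so the extra factor is absorbed into a power of $n^{b}$. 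Substituting $n' = O(n)$, $b' = b$, and this bound on $2^{d'}$ into the $\text{poly}(n', 2^{d'}, n'^{b'})$ estimate of~\cite{FSCD} keeps the total running time within $\text{poly}(n, 2^{d}, n^{b})$, as required. The one genuinely delicate point I anticipate is to fix, consistently with the precise definition of $d$ used in~\cite{FSCD} in the presence of modalities, exactly how these inserted $(q\BS q)$-divisions count, and to verify clause by clause that no occurrence of $q$ contributes more than the claimed additive $O(b)$.
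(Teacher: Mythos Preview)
Your proposal is correct and follows exactly the paper's approach: reduce to $\LbS$ via Theorem~\ref{Th:main}, run the algorithm of~\cite{FSCD} on the translated sequent, and check that the three parameters $n$, $d$, $b$ grow only mildly under $\tau^{\pm}$. Your treatment of the depth parameter $d$ is in fact more careful than the paper's one-line remark that ``$\tau^+$ and $\tau^-$ alter these parameters only linearly'': you correctly isolate the clause $\tau^-(\NMod A) = (q\BS q)\BS \NMod \tau^-(A)\SL (q\BS q)$ as the only source of non-constant growth, bound its total contribution along any path by $O(b)$, and absorb the resulting $2^{O(b)}$ factor into a power of $n^b$.
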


\subsection*{Acknowledgments}

The author
was much inspired by recent and older works of Makoto Kanazawa (to which this note is a modest addition).
The author is also grateful to Lev Beklemishev, Max Kanovich, Glyn Morrill, Vivek Nigam, Mati Pentus, and Andre Scedrov for 
in-depth discussions and exchange of ideas in a very warm, friendly, and collaborative atmosphere.

\end{document}